\author{\authorblockN{ Derrick Wing Kwan Ng, Lin Xiang,  and Robert Schober\thanks{This work was supported in part by the AvH Professorship Program of the Alexander von Humboldt Foundation.}}
Institute for Digital Communications, Universit\"at Erlangen-N\"urnberg, Germany\vspace*{-0.1cm}

}
\title{Multi-Objective Beamforming for Secure Communication in Systems with Wireless Information and Power Transfer \vspace*{-0.1cm}}
\date{\thistime,\,\today}
\newtheorem{Prob}{Problem}
\newtheorem{T-Prob}{Transformed Problem}
\newtheorem{proposition}{Proposition}
\DeclareMathOperator{\Tr}{Tr}
\DeclareMathOperator{\Rank}{Rank}
\DeclareMathOperator{\maxo}{maximize}
\DeclareMathOperator{\mino}{minimize}
\newtheorem{Remark}{Remark}
\newcommand{\textoverline}[1]{$\overline{\mbox{#1}}$}
\newcommand{\abs}[1]{\lvert#1\rvert}
\newcommand{\norm}[1]{\lVert#1\rVert}
\begin{document}

\maketitle

\begin{abstract}
In this paper, we study power allocation for secure communication in a multiuser multiple-input single-output (MISO) downlink system with
simultaneous wireless information and power transfer. The receivers  are able to harvest energy from the radio frequency when they are idle. We propose a multi-objective optimization problem for power allocation algorithm design which incorporates two conflicting system objectives: total transmit power minimization and energy harvesting efficiency maximization. The proposed problem formulation
 takes into account a quality of service (QoS) requirement for the system secrecy capacity. Our designs advocate the dual use of artificial noise in providing secure communication and facilitating efficient energy harvesting. The multi-objective optimization problem is non-convex and is solved by a semidefinite programming (SDP) relaxation
 approach which results in an approximate  of  solution.
  A sufficient condition for the global optimal solution is revealed and the accuracy of the approximation is examined. To strike a balance between computational complexity and system performance,  we propose two suboptimal power allocation schemes. Numerical results not only demonstrate the excellent performance of the proposed suboptimal schemes compared to  baseline schemes, but also unveil
an interesting trade-off between energy harvesting efficiency and total transmit power.

\end{abstract}

\renewcommand{\baselinestretch}{0.95}
\large\normalsize

\section{Introduction}
\label{sect1}
Energy harvesting is a promising technology to provide  self-sustainability  to power-constrained communication devices \cite{JR:hybrid_BS}\nocite{JR:harvesting_single_user,CN:WIPT_fundamental,CN:Shannon_meets_tesla}-\cite{CN:WIP_receiver}.  Traditionally, energy harvesting communication systems \cite{JR:hybrid_BS,JR:harvesting_single_user}   harvest energy from renewable natural energy sources such as geothermal, wind, and solar. However,  these conventional
energy sources are usually  location dependent and may not be  suitable for handheld mobile devices.  On the other
hand, recent developments in simultaneous wireless and information transfer \cite{CN:WIPT_fundamental}--\cite{CN:WIP_receiver} open up a new dimension for prolonging the lifetime of  battery powered mobile devices. In particular, the transmitter can transfer energy to the receiver via  electromagnetic waves in radio frequency
(RF). Besides, the integration of RF energy harvesting capabilities with communication systems  demands a paradigm shift in transceiver signal processing design since it introduces new QoS requirements for efficient energy harvesting.   Although increasing the energy radiated  from the transmitter  facilitates energy harvesting at the receivers, it may also increases the probability of  information leakage and the vulnerability to eavesdropping.

On the other hand, multiple-antenna techniques
 have recently attracted much attention in the research community for providing physical (PHY) layer security
\cite{JR:EE-sec}--\nocite{JR:Kwan_physical_layer}\cite{JR:ken_artifical_noise}.  In \cite{JR:EE-sec}, the authors
proposed a beamforming scheme for maximizing the energy efficiency of secure communication systems. In \cite{JR:Kwan_physical_layer} and \cite{JR:ken_artifical_noise},  the spatial degrees of freedom
offered by multiple antennas are used to degrade the channel of eavesdroppers deliberately   via artificial noise transmission.  Thereby, a large portion of the transmit power is devoted to artificial noise generation for guaranteeing securing communication. However,   the problem formulations in \cite{JR:EE-sec}--\cite{JR:ken_artifical_noise} do not take into account the possibility of RF harvesting at the receivers. Besides, the results  in \cite{JR:hybrid_BS}--\cite{JR:ken_artifical_noise} were obtained for a single system design objective and may not be applicable to multi-objective system design.

 In this paper, we address the above issues.
  To this end, we propose a multi-objective optimization problem formulation which jointly maximizes
   the energy harvesting efficiency and minimizes the total transmit power.
The problem formulation considers secure communication in multiuser multiple-input single-output (MISO)
systems with RF energy harvesting receivers. An approximate solution of
 the optimization problem is obtained in form of a semidefinite programming (SDP) based power allocation algorithm. Furthermore, we also propose two suboptimal schemes which provide close-to-optimal performance.

\section{System Model}
\label{sect:OFDMA_AF_network_model}
In this section,  we present the adopted multiuser downlink channel model for wireless information and power transfer.

 \begin{figure}
 \centering
\includegraphics[width=2in]{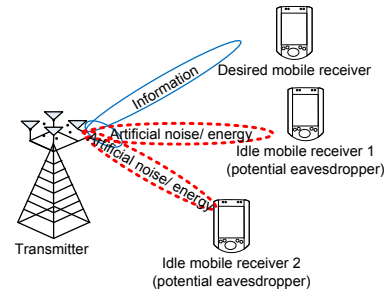}
 \caption{Downlink multiuser communication system with $K=3$ mobile receivers for wireless information and power transfer. The red dotted ellipsoids  show the dual use of artificial noise for providing security and facilitating efficient energy harvesting.} \label{fig:system_model}
\end{figure}

A downlink multiuser communication system for simultaneous wireless information and power transfer is considered.
 There are one transmitter equipped with $N_t>1$ transmit antennas  and $K$ legitimate receivers, each of which is equipped with a
single antenna, cf.  Figure \ref{fig:system_model}.  We assume that the receivers are able to either harvest energy or decode information from the received radio signals in each time slot. In each scheduling slot, the transmitter not only conveys
information to a given receiver, but also  transfers energy\footnote{In this paper,  a normalized energy unit, i.e., Joule-per-second, is adopted.  Therefore,
the terms "power" and "energy" are used interchangeably in this paper.} to the remaining $K-1$ idle receivers for extending their lifetimes. However,
the information signal of the selected receiver is overheard by the $K-1$ idle legitimate receivers and can be eavesdropped by them. Therefore, they are treated as potential eavesdroppers, which is taken into account for power allocation algorithm design for  secure communication. We assume a frequency flat slow fading channel and the
downlink channel gains of all receivers are known at the transmitter.
The received signals at the desired receiver and the $K-1$ idle receivers are given by, respectively,
\begin{eqnarray}
y&=&\mathbf{h}^{H} \mathbf{x}+ z_s \,\,\,\,\mbox{and}\\
y_{I,k}&=&\mathbf{g}_{k}^{H} \mathbf{x}+ z_s ,\,\,  \forall k=\{1,\ldots,K-1\},
\end{eqnarray}
where $\mathbf{x}\in\mathbb{C}^{ N_T \times 1}$ and $\mathbb{C}^{N\times M}$ denote  the transmitted symbol vector and  the space of all $N\times M$ matrices with complex entries, respectively.
$\mathbf{h}^{H}\in\mathbb{C}^{1\times N_T}$ is the channel
vector between the transmitter and the desired receiver
  and
$\mathbf{g}_{k}^{H} \in\mathbb{C}^{1\times N_T}$ is the channel
vector between the transmitter and idle receiver (potential eavesdropper) $k$. $(\cdot)^H$ denotes the conjugate transpose of a input matrix.  $z_s$ is additive white Gaussian noise (AWGN)  with zero mean and variance $\sigma_{s}^2$.


To guarantee secure communication, artificial noise
is generated at the transmitter to interfere with the channels between
the transmitter and the $K-1$ idle receivers (potential eavesdroppers). In particular, the transmit signal vector
\begin{eqnarray}
\mathbf{x}=\mathbf{w}s+\mathbf{v}
\end{eqnarray}
is adopted at the transmitter, where $s\in\mathbb{C}^{1\times 1}$ and $\mathbf{w}\in\mathbb{C}^{N_t\times 1}$  are the information bearing signal for the desired receiver and the corresponding  beamforming vector, respectively. We assume without loss of generally that  ${\cal E}\{\abs{s}^2\}=1$, where $\cal E\{\cdot\}$ denotes statistical expectation. $\mathbf{v}\in\mathbb{C}^{N_t\times 1}$ is the artificial noise vector generated by the transmitter to combat the potential eavesdroppers. Specifically, $\mathbf{v}$ is modeled as a complex Gaussian random vector with mean $\mathbf{0}$ and covariance matrix
$\mathbf{V}\in \mathbb{H}^{N_t}, \mathbf{V}\succeq \mathbf{0}$. Here, $\mathbb{H}^N$ represents the set of all $N$-by-$N$ complex Hermitian matrices and $\mathbf{V}\succeq \mathbf{0}$ indicates that
$\mathbf{V}$ is a positive semidefinite matrix.

\section{Power Allocation Algorithm Design}\label{sect:forumlation}
In this section, we define different quality of service (QoS) measures for secure communication systems with wireless information and power transfer. Then, we
formulate the corresponding power allocation problems. For the sake of notational simplicity, we define the following variables:
$\mathbf{H}=\mathbf{h}\mathbf{h}^H$ and $\mathbf{G}_k=\mathbf{g}_k\mathbf{g}_k^H, k=\{1,\ldots,K-1\}$.
\subsection{Secrecy Capacity}
\label{subsect:Instaneous_Mutual_information}
Given perfect channel state information (CSI) at the
receiver, the system capacity (bit/s/Hz) between the transmitter and the desired receiver
is given by
\begin{eqnarray}\label{eqn:cap}
C=\log_2\Big(1+\Gamma\Big)\,\,\,\,
\mbox{and}\,\,\,\,\Gamma=\frac{\mathbf{w}^H\mathbf{H}\mathbf{w}}
{\Tr(\mathbf{H}\mathbf{V})+\sigma_s^2} ,
\end{eqnarray}
where $\Gamma$ is the received signal-to-interference-plus-noise ratio (SINR) at the desired receiver and $\Tr(\cdot)$ denotes the trace of a matrix.

The channel capacity between the transmitter and idle receiver (potential eavesdropper) $k$  is given  by
\begin{eqnarray}\label{eqn:cap-eavesdropper}
C_{I,k}=\log_2\Big(1+\Gamma_{k}\Big)\,\,\,\,
\mbox{and}\,\,\,\,\Gamma_{I,k}= \frac{\mathbf{w}^H\mathbf{G}_k\mathbf{w}}{\Tr(\mathbf{G}_k\mathbf{V})+\sigma_s^2}   \end{eqnarray}
where  $\Gamma_{I,k}$ is the received SINR at idle receiver $k$.   Therefore, the maximum achievable secrecy capacity between the transmitter
and the desired receiver can be expressed as
\begin{eqnarray}\label{eqn:secrecy_cap}
C_{sec}=\Big[C - \underset{k\in\{1,\ldots,K-1\}}{\max} C_{I,k}\Big]^+,
\end{eqnarray}
where $[x]^+=\max\{0,x\}$. In the literature, secrecy capacity, i.e.,  (\ref{eqn:secrecy_cap}),  is commonly adopted as a QoS requirement for system design to provide secure communication \cite{JR:Kwan_physical_layer,JR:ken_artifical_noise}.

\subsection{Energy Harvesting Efficiency}
 In the considered system, the idle receivers are able to harvest energy for prolonging their lifetimes. Thus,  energy harvesting efficiency also plays an important role in the system design and should be considered in the
problem formulation. To this end, we define energy harvesting efficiency as the ratio of the total harvest power and the total radiated power. The total amount of energy harvested by the $K-1$ receivers is modeled as
\begin{eqnarray}\label{eqn:harvested_power}
\mbox{HP}(\mathbf{w},\mathbf{V})=\sum_{k=1}^{K-1}\varepsilon_k\Big(\mathbf{w}^H\mathbf{G}_k\mathbf{w}
+\Tr(\mathbf{G}_k\mathbf{V})\Big),
\end{eqnarray}
where $\varepsilon_k$ is a constant, $1\ge\varepsilon_k\ge0$, which denotes the RF energy conversion efficiency of idle receiver $k$ in converting
the received radio signal to electrical energy.
Indeed, both beaming vector $\mathbf{w}$ and artificial noise vector $\mathbf{v}$ carry energy and can act as energy supply to the idle receivers. Although increasing the transmit power in $\mathbf{w}$ facilitates energy harvesting at the receivers, it  may also increases the susceptibility to  eavesdropping, cf. (\ref{eqn:cap})--(\ref{eqn:harvested_power}).  Therefore, the dual use of artificial noise in providing simultaneous efficient energy harvesting  and secure communication is proposed in this paper.

On the other hand, the power radiated from the transmitter can be expressed
as
\begin{eqnarray}
 \label{eqn:power_consumption}\mbox{TP}(\mathbf{w},\mathbf{V})=\norm{\mathbf{w}}^2+\Tr(\mathbf{V}),
\end{eqnarray}
where $\norm{\cdot}$ denotes the Euclidean vector norm.
Finally, the energy harvesting efficiency of the considered system is
defined as
\begin{eqnarray}
\eta_{\mbox{eff}}(\mathbf{w},\mathbf{V})=\frac{\mbox{HP}(\mathbf{w},\mathbf{V})}{\mbox{TP}(\mathbf{w},\mathbf{V})}.
\end{eqnarray}
\subsection{Optimization Problem Formulations}
\label{sect:cross-Layer_formulation}
We first propose two single-objective system design formulations for secrecy communication. Then, we consider the two proposed system design objectives jointly under the framework of multi-objective optimization.   The first problem formulation aims at energy harvesting efficiency maximization:
\begin{Prob}[Energy Harvesting Efficiency Maximization]
\begin{eqnarray}
\label{eqn:cross-layer}&&\hspace*{-1mm} \underset{\mathbf{V}\in \mathbb{H}^{N_t},\mathbf{w}}{\maxo}\,\, \,\, \eta_{\mbox{eff}}(\mathbf{w},\mathbf{V})\nonumber\\
\notag \mbox{s.t.} &&\hspace*{-5mm}\mbox{C1: }\notag\frac{\mathbf{w}^H\mathbf{H}\mathbf{w}}{\Tr(\mathbf{H}
\mathbf{V})+\sigma_s^2} \ge \Gamma_{req}, \\
&&\hspace*{-5mm}\mbox{C2: }\notag\frac{\mathbf{w}^H\mathbf{G}_k\mathbf{w}}
{\Tr(\mathbf{G}_k\mathbf{V})+\sigma_s^2} \le \Gamma_{tol_k},\forall k,\\
&&\hspace*{-15mm}\mbox{C3: } \norm{\mathbf{w}}^2  +\Tr(\mathbf{V})\le P_{\max}, \quad \mbox{C4:}\,\, \mathbf{V}\succeq \mathbf{0}.
\end{eqnarray}
\end{Prob}
Constants $\Gamma_{req}$ and $\Gamma_{tol_k},\forall k\in\{1,\ldots,K-1\}$, are chosen by the system operator such that   $\Gamma_{req}\gg \Gamma_{{tol_k}}>0$ and the maximum secrecy capacity of the system is lower bounded by $C_{sec}\ge \log_2(1+\Gamma_{req})-\log_2(1+\underset{k}{\max}\{\Gamma_{{tol_k}}\})\ge 0$.
 $P_{\max}$ in C3 restricts the  transmit power to account for the maximum  power that can be radiated from a power amplifier.

To facilitate the presentation and without loss of generality, we rewrite Problem 1 in (\ref{eqn:cross-layer}) as
\begin{eqnarray}\label{eqn:cross-layer1-flip}
&&\hspace*{-15mm} \underset{\mathbf{V}\in \mathbb{H}^{N_t},\mathbf{w}}{\mino}\,\,\,\, F_{1}(\mathbf{w},\mathbf{V})\nonumber\\
\mbox{s.t.} &&\hspace*{0mm}\mbox{C1 -- C4},
\end{eqnarray}
where  $F_{1}(\mathbf{w},\mathbf{V})=- \eta_{\mbox{eff}}(\mathbf{w},\mathbf{V})$.

The second system design objective is the minimization of the total transmit power and   can be
mathematically formulated as
\begin{Prob}[Total Transmit Power Minimization] \label{Prob:min_tx}
\begin{eqnarray}\label{eqn:cross-layer2}
&&\hspace*{-15mm} \underset{\mathbf{V}\in \mathbb{H}^{N_t},\mathbf{w}
}{\mino}\,\,\,\, F_{2}(\mathbf{w},\mathbf{V})\nonumber\\
 \mbox{s.t.} &&\hspace*{0mm}\mbox{C1 -- C4},
\end{eqnarray}
\end{Prob}
where $F_{2}(\mathbf{w},\mathbf{V})=\mbox{TP}(\mathbf{w},\mathbf{V})$. The design criterion of Problem \ref{Prob:min_tx} yields the minimal total transmit power that satisfies the secrecy QoS requirement of the system. We note that Problem \ref{Prob:min_tx} does not take into account the energy harvesting ability of the idle receivers and focuses only on the requirement of physical layer security.

  In practice, the two above system design objectives are both desirable for the system operator but they are usually conflicting with one another. In the literature, multi-objective optimization is proposed for studying the trade-off between conflicting system design objectives via the concept of Pareto optimality.  In the following, we adopt the
weighted Tchebycheff method \cite{JR:MOOP} for investigating the trade-off between Problem 1 and Problem 2.

\begin{Prob}[Multi-Objective Optimization] \label{Prob:multi}
\begin{eqnarray}
\label{eqn:cross-layer3}&&\hspace*{-25mm}\underset{\mathbf{V}\in \mathbb{H}^{N_t},\mathbf{w}}{\mino}\,\,\max_{j=1,2}\,\, \Big\{\lambda_j (F_j(\mathbf{w},\mathbf{V})-F_j^*)\Big\}\nonumber\\
 \mbox{s.t.} &&\hspace*{3mm}\mbox{C1 -- C4},
\end{eqnarray}
\end{Prob}
where $F_j^*$ is the optimal objective value with respect to problem formulation $j$. $\lambda_j\ge0$ is a weight imposed on objective function $j$ subject to $\sum_j \lambda_j=1$.  In practice, variable $\lambda_j$ reflects the preference of the system operator for the $j$-th objective over the others. In fact,  by varying the values of $\lambda_j$, Problem \ref{Prob:multi} yields the complete Pareto optimal set \cite{JR:MOOP}, despite the non-convexity of the set. In the extreme case, Problem \ref{Prob:multi} is equivalent to Problem $j$ when $\lambda_j=1$ and $\lambda_i=0, \forall i\ne j$.

\section{Solution of the Optimization Problems} \label{sect:solution}

The optimization problems in (\ref{eqn:cross-layer1-flip}), (\ref{eqn:cross-layer2}), and (\ref{eqn:cross-layer3}) are non-convex with respect to the optimization variables. In order to obtain a tractable solution for the problems, we recast Problems 1, 2, and 3 as
convex optimization problems by semidefinite programming (SDP) relaxation and study the corresponding optimality conditions.
\subsection{Semidefinite Programming Relaxation} \label{sect:solution_dual_decomposition}
For facilitating the SDP relaxation, we define
\begin{eqnarray}\label{eqn:change_of_variables}
\mathbf{W}=\mathbf{w}\mathbf{w}^H,\, \mathbf{W}=\frac{\overline{\mathbf{W}}}{\xi},  \mathbf{V}=\frac{\overline{\mathbf{V}}}{\xi},\, \xi=\frac{1}{\Tr({\mathbf{W}})+\Tr({\mathbf{V}})}
\end{eqnarray}
 and rewrite Problems 1--3  in terms of $\overline{\mathbf{W}}$ and $\overline{\mathbf{V}}$.

\begin{T-Prob}[Energy Harvesting Efficiency Max.]\end{T-Prob}\vspace*{-0.8cm}
\begin{eqnarray}
\label{eqn:cross-layer-t1}&&\hspace*{-18mm}\underset{\overline{\mathbf{V}},\overline{\mathbf{W}}\in \mathbb{H}^{N_t},\xi
}{\mino}\,\, -\sum_{k=1}^{K-1}\varepsilon_k\Tr(\mathbf{G}_k(\overline{\mathbf{W}}+\overline{\mathbf{V}}))\nonumber\\
\notag \mbox{s.t.} &&\hspace*{-5mm}\mbox{\textoverline{C1}:} \notag\frac{\Tr(\mathbf{H}\overline{\mathbf{W}})}{\Tr(\mathbf{H}
\overline{\mathbf{V}})+\sigma_s^2\xi} \ge \Gamma_{req}, \\
&&\hspace*{-5mm}\mbox{\textoverline{C2}: }\notag\frac{\Tr(\mathbf{G}_k\overline{\mathbf{W}})}
{\Tr(\mathbf{G}_k\overline{\mathbf{V}})+\sigma_s^2\xi} \le \Gamma_{tol_k},\forall k,\\
&&\hspace*{-5mm}\mbox{\textoverline{C3}: }\notag \Tr(\overline{\mathbf{W}})  +\Tr(\overline{\mathbf{V}})\le P_{\max}\xi, \\
&&\hspace*{-5mm}\mbox{\textoverline{C4}:}\,\, \mathbf{\overline{W}},\mathbf{\overline{V}}\succeq \mathbf{0},\quad\mbox{\textoverline{C5}:}\,\, \xi \ge 0, \notag\\
&&\hspace*{-15mm}\mbox{\textoverline{C6}:}\,\, \Tr(\overline{\mathbf{W}})+\Tr(\overline{\mathbf{V}})\le 1,\,\,\mbox{\textoverline{C7}:}\,\, \Rank(\mathbf{\overline{W}})=1,
\end{eqnarray}
where  $\overline{\mathbf{W}}\succeq \mathbf{0}$, $\overline{\mathbf{W}}\in \mathbb{H}^{N_t}$, and $\Rank(\overline{\mathbf{W}})=1$ in (\ref{eqn:cross-layer-t1}) are imposed to guarantee that $\overline{\mathbf{W}}=\xi\mathbf{w}\mathbf{w}^H$. Here,
$\Rank(\cdot)$ is an operator which returns the rank of an input matrix.

\begin{T-Prob}[Total Transmit Power Min.]
\begin{eqnarray}
\label{eqn:cross-layer-t2}&&\hspace*{-10mm} \underset{\overline{\mathbf{V}},\mathbf{\overline{W}}\in \mathbb{H}^{N_t}, \xi
}{\mino}\,\, \frac{1}{\xi}\nonumber\\
 \mbox{s.t.} &&\hspace*{-5mm}\mbox{\textoverline{C1} -- \textoverline{C7}}.
\end{eqnarray}
\end{T-Prob}

\begin{T-Prob}[Multi-Objective Optimization]
\begin{eqnarray}
\label{eqn:cross-layer-t3}&&\hspace*{8mm}\underset{\overline{\mathbf{V}},\mathbf{\overline{W}}\in \mathbb{H}^{N_t}, \xi,\tau
}{\mino}\,\,\,\tau \nonumber\\
\notag \mbox{s.t.}
&&\hspace*{10mm}\mbox{\textoverline{C1} -- \textoverline{C7}},\\
&&\hspace*{-5mm}\mbox{\textoverline{C8}: }\lambda_j (\overline{F_j}-F_j^*)\le \tau, \forall j\in\{1,2\},
\end{eqnarray}
\end{T-Prob}
where $\overline{F_1}=-\sum_{k=1}^{K-1}\varepsilon_k\Tr(\mathbf{G}_k(\overline{\mathbf{W}}+\overline{\mathbf{V}}))$, $\overline{F_2}= \frac{1}{\xi}$,   $\tau$ is an auxiliary optimization variable, and (\ref{eqn:cross-layer-t3}) is the epigraph representation \cite{book:convex} of (\ref{eqn:cross-layer3}).
\begin{proposition}
The above transformed problems (\ref{eqn:cross-layer-t1})--(\ref{eqn:cross-layer-t3}) are equivalent to the original problems in  (\ref{eqn:cross-layer1-flip})--(\ref{eqn:cross-layer3}), respectively. Specifically, we can recover the solution of the original problems based on (\ref{eqn:change_of_variables}).
\end{proposition}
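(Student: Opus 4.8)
The plan is a two-stage reduction followed by a two-sided comparison of optimal values. \emph{First}, I would lift the beamformer: for any Hermitian $\mathbf{A}$ we have $\mathbf{w}^H\mathbf{A}\mathbf{w}=\Tr(\mathbf{A}\mathbf{W})$ and $\norm{\mathbf{w}}^2=\Tr(\mathbf{W})$ with $\mathbf{W}=\mathbf{w}\mathbf{w}^H$, and a Hermitian positive semidefinite matrix has rank one if and only if it equals $\mathbf{w}\mathbf{w}^H$ for some $\mathbf{w}$; hence Problems 1--3 are verbatim the same problems written over $(\mathbf{W},\mathbf{V})$ with the additional constraints $\mathbf{W}\succeq\mathbf{0}$ and $\Rank(\mathbf{W})=1$. \emph{Second}, I would apply the Charnes--Cooper-type substitution (\ref{eqn:change_of_variables}): introducing the scalar $\xi$ and setting $\overline{\mathbf{W}}=\xi\mathbf{W}$, $\overline{\mathbf{V}}=\xi\mathbf{V}$ clears the denominators so that $\eta_{\mbox{eff}}$ and the fractional constraints C1, C2 become affine in $(\overline{\mathbf{W}},\overline{\mathbf{V}},\xi)$. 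It then remains to show that each original problem and its transform attain the same optimal value and that (\ref{eqn:change_of_variables}) carries optimizers to optimizers.

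For the forward inclusion I would take any feasible $(\mathbf{w},\mathbf{V})$ of Problem $j$. Since $\Gamma_{req}>0$ and $\sigma_s^2>0$, constraint C1 forces $\mathbf{w}\neq\mathbf{0}$, so $\mbox{TP}=\Tr(\mathbf{W})+\Tr(\mathbf{V})>0$ and $\xi:=1/\mbox{TP}>0$ is well defined, giving $\Tr(\overline{\mathbf{W}})+\Tr(\overline{\mathbf{V}})=1$. Multiplying numerator and denominator of C1 and C2 by $\xi$ gives \textoverline{C1} and \textoverline{C2}; C3 reads $1/\xi\le P_{\max}$, i.e.\ \textoverline{C3} holds (and \textoverline{C6} holds with equality); \textoverline{C4}, \textoverline{C5}, \textoverline{C7} are immediate; and $F_1=-\xi\sum_k\varepsilon_k\Tr(\mathbf{G}_k(\mathbf{W}+\mathbf{V}))=\overline{F_1}$, $F_2=1/\xi=\overline{F_2}$, so the objective is preserved. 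Hence the transformed optimal value does not exceed the original one.

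For the reverse inclusion I would start from an optimal $(\overline{\mathbf{W}},\overline{\mathbf{V}},\xi)$ of a transformed problem. First I would establish $\xi>0$: if $\xi=0$, \textoverline{C3} forces $\Tr(\overline{\mathbf{W}})+\Tr(\overline{\mathbf{V}})\le0$, hence $\overline{\mathbf{W}}=\overline{\mathbf{V}}=\mathbf{0}$ by \textoverline{C4}, contradicting $\Rank(\overline{\mathbf{W}})=1$ in \textoverline{C7}. Next I would show \textoverline{C6} may be taken active: if $t:=\Tr(\overline{\mathbf{W}})+\Tr(\overline{\mathbf{V}})<1$, the common rescaling $(\overline{\mathbf{W}},\overline{\mathbf{V}},\xi)\mapsto(\overline{\mathbf{W}}/t,\overline{\mathbf{V}}/t,\xi/t)$ preserves feasibility of \textoverline{C1}--\textoverline{C7} (the SINR ratios in \textoverline{C1}, \textoverline{C2} and the rank are invariant under common scaling, and \textoverline{C3} becomes $1\le P_{\max}\xi/t$, which follows from $t\le P_{\max}\xi$) and does not increase the objective, since $\overline{F_2}=1/\xi\mapsto t/\xi\le1/\xi$, $\overline{F_1}\le0\mapsto\overline{F_1}/t\le\overline{F_1}$, and in Problem 3 the epigraph variable $\tau$ can be lowered accordingly. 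For such an optimizer I would set $\mathbf{W}=\overline{\mathbf{W}}/\xi$, $\mathbf{V}=\overline{\mathbf{V}}/\xi$, so that $\Tr(\mathbf{W})+\Tr(\mathbf{V})=1/\xi$ renders (\ref{eqn:change_of_variables}) self-consistent, $\Rank(\mathbf{W})=1$ yields $\mathbf{W}=\mathbf{w}\mathbf{w}^H$, \textoverline{C1}--\textoverline{C4} translate back into C1--C4, and the objective values agree; thus the original optimal value does not exceed the transformed one. Together with the forward inclusion this gives equality of the optimal values and shows (\ref{eqn:change_of_variables}) is the claimed recovery map; Problem 3 follows since (\ref{eqn:cross-layer-t3}) is the epigraph form of (\ref{eqn:cross-layer3}) and $\overline{F_1},\overline{F_2}$ equal $F_1,F_2$ under (\ref{eqn:change_of_variables}).

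The main obstacle is checking that the two relaxations inherent in the Charnes--Cooper step are tight. Allowing $\xi\ge0$ rather than $\xi>0$ is harmless precisely because the rank-one constraint \textoverline{C7}, together with \textoverline{C3}, excludes $\xi=0$; and relaxing the normalization $\Tr(\overline{\mathbf{W}})+\Tr(\overline{\mathbf{V}})=1$ to the inequality \textoverline{C6} is harmless precisely because \textoverline{C1}--\textoverline{C7} are positively homogeneous, which lets any optimizer be rescaled to the boundary of \textoverline{C6}. Everything else is a routine substitution check, to be carried out separately for the three objectives.
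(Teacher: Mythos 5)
Your proposal is correct and follows essentially the same route as the paper: the Charnes--Cooper change of variables, followed by showing that the two relaxations it introduces (allowing $\xi\ge 0$ and weakening the normalization to the inequality \textoverline{C6}) are tight, exactly as in Appendix I. The only differences are cosmetic --- you rule out $\xi=0$ via \textoverline{C3}/\textoverline{C4}/\textoverline{C7} rather than via \textoverline{C1}, and you phrase the activity of \textoverline{C6} as a harmless rescaling of an optimizer instead of the paper's strict-decrease contradiction with $\delta>1$, while also spelling out the rank-one lifting and the two-sided value comparison that the paper leaves implicit.
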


\begin{proof}
Please refer to Appendix I.
\end{proof}

 By relaxing constraint $\mbox{C7: }\Rank(\overline{\mathbf{W}})=1$, i.e., removing it from each problem formulation, the considered problems are convex SDP and can be solved efficiently by numerical solvers such as SeDuMi \cite{JR:SeDumi}. Besides, if the obtained solution for a relaxed SDP  problem is rank-one matrix, i.e.,  $\Rank(\overline{\mathbf{W}})=1$, then it is the optimal solution of the  original problem.  Generally, there is no guarantee that the relaxed problems yield rank-one solutions and  the results of the relaxed
problems serve as  performance upper bounds for the original
problems.

\begin{Remark}
 $F^*_j$ is defined as the optimal objective with respect to problem formulation $j$ in (\ref{eqn:cross-layer-t3}). Whenever we consider  an upper/(a lower) bound of problem $j$, then $F^*_j$ is referring to the corresponding upper/(lower) bound value of the original problem $j$. As a result, if a bound of problem $j$ is considered in Problem 3, then by varying $\lambda_j$, the relaxed SDP of Problem 3 provides an approximation for the trade-off of the original problems.
\end{Remark}

 In the following, we reveal different conditions that ensure that $\Rank(\mathbf{\overline{W}}) = 1$ holds for the relaxed problems and exploit theses conditions for the design of two suboptimal power allocation schemes.   Since  transformed Problem 3 is a generalization of transformed Problems 1 and 2, we focus on the optimality conditions for SDP relaxation of transformed Problem 3.

 \subsection{Optimality Conditions for SDP Relaxation }
In this subsection, we study the tightness of the proposed  SDP relaxation of transformed Problem 3.
The Lagrangian function  of  (\ref{eqn:cross-layer-t3}) is given by
\begin{eqnarray}\hspace*{-2mm}&&\notag{\cal
L}(\overline{\mathbf{W}},\overline{\mathbf{V}},\tau,\xi,\alpha,\beta,\boldsymbol{\theta},\mu,\boldsymbol{\kappa},
\nu,\mathbf{Y},\mathbf{Z})\\
\notag\hspace*{-5mm}&=&\hspace*{-3mm}\tau-\Tr(\mathbf{Y\overline{W}})-\Tr(\mathbf{Z\overline{V}})+\sum_{j=1}^2 \kappa_j \big(\lambda_j(\overline{F_j}-{F_j}^*)-\tau\big)\\
\notag\hspace*{-5mm}&+&\hspace*{-3mm} \sum_{k=1}^{K-1}\theta_k \Big(\hspace*{-0.5mm}
\Tr(\mathbf{G}_k\mathbf{\overline{W}})-
\hspace*{-0.5mm}
\Gamma_{tol_{k}}\Tr(\mathbf{G}_k\mathbf{\overline{V}})\hspace*{-0.5mm}-\hspace*{-0.5mm}
\Gamma_{tol_{k}}\sigma_s^2\xi\Big)-\nu\xi\\
\notag\hspace*{-5mm}&+& \hspace*{-3mm}\notag (\alpha+\mu)\Big( \Tr(\overline{\mathbf{W}}) +\Tr(\overline{\mathbf{V}})\Big)- \alpha\xi P_{\max}-\mu\\
\hspace*{-5mm}&+&\hspace*{-3mm} \beta\Big(\Gamma_{req}\Tr(\mathbf{H}\mathbf{\overline{V}})+\Gamma_{req}\xi\sigma_{s}^2-\Tr(\mathbf{H}\mathbf{\overline{W}})
\hspace*{-0.5mm}\Big),
\label{eqn:Lagrangian}
\end{eqnarray}
where $\beta,\alpha,\nu,
\mu\ge 0$ are the Lagrange multipliers associated with constraints $\mbox{\textoverline{C1}, \textoverline{C3}, \textoverline{C5}} $ and $\mbox{\textoverline{C6}}$, respectively. $\boldsymbol \theta$,  with elements $\theta_{k}\ge 0, k=\{1,\ldots,K-1\}$,  is the Lagrange multiplier vector
associated with the maximum tolerable SINRs of the idle users (potential eavesdroppers) in $\mbox{\textoverline{C2}}$.
$\boldsymbol \kappa$,  with elements $\kappa_{j}\ge 0,j=\{1,2\}$,  is the Lagrange multiplier vector
associated with constraint $\mbox{\textoverline{C8}}$. Matrices $\mathbf{Y,Z}\succeq \mathbf{0}$ are the Lagrange multipliers for the semidefinite constraints on matrices $\overline{\mathbf{W}}$ and $\overline{\mathbf{V}}$ in $\mbox{\textoverline{C4}}$, respectively. Thus, the dual problem for the relaxed SDP  transformed Problem 3 is given by
\begin{eqnarray}\label{eqn:dual}\notag
&&\hspace*{-15mm}\underset{ \underset{\mathbf{Y},\mathbf{Z}\succeq \mathbf{0}}{\alpha,\beta,\boldsymbol{\theta},\mu,\boldsymbol{\kappa},
\nu\ge0}}{\max}\ \underset{{\mathbf{\overline{W},\overline{V}}\in \mathbb{H}^{N_t},\tau,\xi}}{\min}{\cal
L}(\overline{\mathbf{W}},\overline{\mathbf{V}},\tau,\xi,\alpha,\beta,\boldsymbol{\theta},\mu,\boldsymbol{\kappa},
\nu,\mathbf{Y},\mathbf{Z})\label{eqn:master_problem}\\
 \mbox{s.t.} &&\hspace*{15mm}\sum_{j}\kappa_j=1,
\end{eqnarray}
where $\sum_{j}\kappa_j=1$ is imposed to enforce a solution of the dual problem that is bounded from below.

Now, we reveal different optimality conditions for rank-one matrix solutions for the  relaxed SDP version of the transformed problems in
the following proposition.
\begin{proposition}\label{prop1} Consider the relaxed SDP  version  of all transformed problems  for $\Gamma_{req}>0$. Then, for the relaxed SDP  version of transformed Problems 1 and 3, $\theta_k\ge\kappa_1\ge 0,\forall k$,  is a sufficient condition for $\Rank(\mathbf{\overline{W}})=1$. We note that $\kappa_1=1$ holds for transforming Problem 3 back to Problem 1. For the relaxed SDP  version of transformed
Problem 2, $\Rank(\mathbf{\overline{W}})=1$ always hold.
\end{proposition}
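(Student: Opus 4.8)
\emph{Proof proposal.} The plan is to argue via the Karush--Kuhn--Tucker (KKT) conditions of the relaxed SDP of transformed Problem 3, since transformed Problems 1 and 2 are special cases of it. The relaxed problem is a convex SDP, so assuming it is feasible Slater's condition holds, strong duality applies, and the KKT conditions are necessary and sufficient for optimality; I would work with an arbitrary optimal primal--dual point $(\overline{\mathbf{W}},\overline{\mathbf{V}},\xi,\tau)$ and $(\alpha,\beta,\boldsymbol{\theta},\mu,\boldsymbol{\kappa},\nu,\mathbf{Y},\mathbf{Z})$. Only three KKT conditions are needed: dual feasibility $\mathbf{Y}\succeq\mathbf{0}$, complementary slackness $\mathbf{Y}\overline{\mathbf{W}}=\mathbf{0}$ for constraint \textoverline{C4}, and the stationarity condition $\nabla_{\overline{\mathbf{W}}}{\cal L}=\mathbf{0}$ obtained from (\ref{eqn:Lagrangian}), which reads
\begin{equation}
\mathbf{Y}=\mathbf{A}-\beta\mathbf{H},\qquad \mathbf{A}\triangleq(\alpha+\mu)\mathbf{I}+\sum_{k=1}^{K-1}\big(\theta_k-\kappa_1\lambda_1\varepsilon_k\big)\mathbf{G}_k,
\end{equation}
where $\kappa_1=1$ (and $\lambda_1=1$) when Problem 3 reduces to Problem 1, and the $\kappa_1\lambda_1\varepsilon_k$ terms are absent in the case of Problem 2 because its objective does not involve $\mathbf{G}_k$.

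The core step is to show $\mathbf{A}\succ\mathbf{0}$. Under the hypothesis $\theta_k\ge\kappa_1$, together with $0\le\lambda_1\le1$ and $0\le\varepsilon_k\le1$, one has $\theta_k-\kappa_1\lambda_1\varepsilon_k\ge\kappa_1(1-\lambda_1\varepsilon_k)\ge0$, so $\sum_k(\theta_k-\kappa_1\lambda_1\varepsilon_k)\mathbf{G}_k\succeq\mathbf{0}$ and hence $\mathbf{A}\succeq(\alpha+\mu)\mathbf{I}\succeq\mathbf{0}$; for Problem 2 this positive semidefiniteness holds unconditionally since all $\theta_k\ge0$, which is exactly why no extra condition on $\boldsymbol{\theta}$ is required there. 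To upgrade to strict positive definiteness I would show $\alpha+\mu>0$ by contradiction: if $\alpha=\mu=0$, then combining $\mathbf{Y}\succeq\mathbf{0}$, the analogous stationarity condition $\mathbf{Z}=\beta\Gamma_{req}\mathbf{H}-\sum_k(\kappa_1\lambda_1\varepsilon_k+\theta_k\Gamma_{tol_k})\mathbf{G}_k\succeq\mathbf{0}$, the stationarity condition in $\xi$, complementary slackness, and the fact that $\Gamma_{req}>0$ together with \textoverline{C1} and $\xi>0$ forces $\Tr(\mathbf{H}\overline{\mathbf{W}})>0$ (hence $\overline{\mathbf{W}}\neq\mathbf{0}$), one cascades to the conclusion that all Lagrange multipliers except possibly $\boldsymbol{\kappa},\nu$ vanish, which makes the inner minimization in the dual (\ref{eqn:master_problem}) unbounded below and contradicts strong duality; for Problem 1 the contradiction comes more directly since $\kappa_1=1>0$ then forces some $\theta_k>0$. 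This positive-definiteness step is the main obstacle -- everything else is bookkeeping -- and a minor point to verify alongside it is that $\xi>0$ at every optimal point, which follows from the equivalence of the transformed and original problems (Proposition 1) together with $\Gamma_{req}>0$.

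Once $\mathbf{A}\succ\mathbf{0}$ is in hand, the rest is routine. Since $\mathbf{H}=\mathbf{h}\mathbf{h}^H$ has rank one, $\mathbf{Y}=\mathbf{A}-\beta\mathbf{h}\mathbf{h}^H$ is a rank-one downward perturbation of a positive definite matrix, so $\Rank(\mathbf{Y})\ge N_t-1$: if $\dim\ker\mathbf{Y}\ge2$ one could pick a nonzero $\mathbf{u}\in\ker\mathbf{Y}$ with $\mathbf{h}^H\mathbf{u}=0$ (one linear equation on a space of dimension at least two), whence $\mathbf{0}=\mathbf{Y}\mathbf{u}=\mathbf{A}\mathbf{u}$ and therefore $\mathbf{u}=\mathbf{0}$ by $\mathbf{A}\succ\mathbf{0}$, a contradiction. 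Then $\mathbf{Y}\overline{\mathbf{W}}=\mathbf{0}$ gives $\Rank(\overline{\mathbf{W}})\le N_t-\Rank(\mathbf{Y})\le1$, and since $\Gamma_{req}>0$ makes \textoverline{C1} force $\overline{\mathbf{W}}\neq\mathbf{0}$, we conclude $\Rank(\overline{\mathbf{W}})=1$. Setting $\kappa_1=1,\lambda_1=1$ yields the statement for Problem 1; for Problem 2 the same chain goes through verbatim with $\mathbf{A}=(\alpha+\mu)\mathbf{I}+\sum_k\theta_k\mathbf{G}_k\succeq\mathbf{0}$, so $\Rank(\overline{\mathbf{W}})=1$ holds without any condition on $\boldsymbol{\theta}$.
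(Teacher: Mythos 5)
Your overall route is the paper's: KKT conditions for the relaxed transformed Problem 3, the stationarity identity $\mathbf{Y}=(\alpha+\mu)\mathbf{I}+\sum_{k}(\theta_k-\kappa_1\lambda_1\varepsilon_k)\mathbf{G}_k-\beta\mathbf{H}$, complementary slackness $\mathbf{Y}\overline{\mathbf{W}}=\mathbf{0}$, a rank argument giving $\Rank(\mathbf{Y})\ge N_t-1$, and $\overline{\mathbf{W}}\ne\mathbf{0}$ from \textoverline{C1} when $\Gamma_{req}>0$, with Problems 1 and 2 recovered as the extreme cases of the weights. Your stationarity condition (retaining the $\lambda_1\varepsilon_k$ factors) and the observation that $\theta_k\ge\kappa_1$ still makes every $\mathbf{G}_k$ coefficient nonnegative are, if anything, more careful than the paper's, and your kernel-dimension argument is equivalent to the paper's rank-subadditivity step.

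The one genuine gap is exactly the step you flag as ``the main obstacle'': the strict positivity $\alpha+\mu>0$, which you need for $\mathbf{A}\succ\mathbf{0}$, is never actually established. Your contradiction sketch (assume $\alpha=\mu=0$, then use $\mathbf{Y},\mathbf{Z}\succeq\mathbf{0}$, stationarity in $\xi$, and complementary slackness to cascade to vanishing multipliers and dual unboundedness) is not carried out, and as sketched it needs ingredients you do not supply: forcing $\beta=0$ and $\theta_k=0$ from the semidefiniteness conditions requires, e.g., that $\mathbf{h}$ not lie in the span of the $\mathbf{g}_k$'s and that $\Gamma_{tol_k}>0$, and the subcase $\kappa_1=0$, $\kappa_2=1$ in Problem 3 needs separate treatment. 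Note also that the gap touches your Problem 2 claim, since ``always rank one'' there still rests on $\alpha+\mu>0$. The paper closes this step differently and much more cheaply: from the proof of Proposition 1, constraint \textoverline{C6} must hold with equality at the optimum, and the paper uses this to assert $\mu^*>0$ (in effect a sensitivity argument: relaxing \textoverline{C6} strictly improves the optimum, so its multiplier is positive), after which $(\alpha^*+\mu^*)\mathbf{I}+\sum_k(\theta^*_k-\kappa^*_1)\mathbf{G}_k$ is positive definite and the rank argument goes through. If you import that observation, or otherwise complete a proof that $\mu>0$ (or $\alpha+\mu>0$), your argument is complete and coincides with the paper's.
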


\begin{proof}
Please refer to Appendix II.
\end{proof}


In the following, inspired by Proposition \ref{prop1}, two suboptimal
power allocation schemes are proposed.

\subsubsection{Suboptimal Power Allocation Scheme 1}
If $\overline{F_1}$ is independent of optimization variable $\overline{\mathbf{W}}$, the sufficient condition in Proposition 2, i.e., $\theta_k\ge\kappa_1\ge 0,\forall k$, always holds. As a result, we replace $\overline{F_1}=-\sum_{k=1}^{K-1}\varepsilon_k\Tr(\mathbf{G}_k(\overline{\mathbf{W}}+\overline{\mathbf{V}}))$ in constraint  $\mbox{\textoverline{C8}}$ by $-\sum_{k=1}^{K-1}\varepsilon_k\Tr(\mathbf{G}_k\overline{\mathbf{V}})$  and the relaxed SDP  version of (\ref{eqn:cross-layer-t3}) can be written as:
\begin{eqnarray}
\label{eqn:suboptimal1}&&\hspace*{8mm}\underset{\overline{\mathbf{V}},\mathbf{\overline{W}}\in \mathbb{H}^{N_t}, \xi,\tau
}{\mino}\,\,\,\tau \nonumber\\
\notag \mbox{s.t.}
&&\hspace*{10mm}\mbox{\textoverline{C1} -- \textoverline{C6}},\\
&&\hspace*{-8mm}\notag\mbox{\textoverline{C8}: }\lambda_1 \Big( -\sum_{k=1}^{K-1}\varepsilon_k\Tr(\mathbf{G}_k\overline{\mathbf{V}})-F_1^{*}\Big)\le \tau, \\
&&\hspace*{5mm}   \lambda_2 (1/\xi-F_2^{*})\le \tau.
\end{eqnarray}
We note that the contribution of beamforming matrix $\overline{\mathbf{W}}$ in $\mbox{\textoverline{C8}}$, i.e., $\Tr(\mathbf{G}_k^H\overline{\mathbf{W}})$, is neglected in (\ref{eqn:suboptimal1}); a smaller feasible solution set is considered.  Besides,  the solution of problem (\ref{eqn:suboptimal1}) has always rank-one, i.e., $\Rank(\overline{\mathbf{W}})=1$, since the sufficient condition stated in Proposition 2 is always satisfied.
Therefore,  the  solution of problem (\ref{eqn:suboptimal1}) serves as a performance lower bound for the original optimization problem (\ref{eqn:cross-layer3}).
\subsubsection{Suboptimal Power Allocation Scheme 2}
  A hybrid power allocation scheme is proposed and is summarized in Table \ref{table:algorithm}. In particular, we compute the solutions for  the relaxed SDP  version of Problem 3 in (\ref{eqn:cross-layer-t3}) and suboptimal scheme 1 in parallel and select one of the solutions. When the solution for the SDP relaxation is rank-one, i.e., $\Rank(\overline{\mathbf{W}})=1$, we select the solution given by the SDP relaxation since the global optimal is achieved. Otherwise, we  will adopt the solution given by the proposed suboptimal scheme 1.
\begin{table}[t]\caption{Suboptimal Power Allocation Scheme.}\label{table:algorithm}
\vspace*{-5mm}\small
\begin{algorithm} [H]           \setcounter{algorithm}{1}          
\floatname{algorithm}{Suboptimal Power Allocation Scheme}
\caption{}          
\label{alg1}                           
\begin{algorithmic} [1]
\normalsize           

\STATE Solve the relaxed SDP  version of  problem (\ref{eqn:cross-layer-t3}) and problem (\ref{eqn:suboptimal1}) in parallel
\IF {the solution of the relaxed SDP  version of the problem in (\ref{eqn:cross-layer-t3}) is rank-one, i.e., $\Rank(\overline{\mathbf{W}})=1$, } \STATE  $\mbox{Global optimal solution}=\,$\TRUE \RETURN
 ${\overline{\mathbf{W}}}^*$, ${\xi}^*$ ,${\overline{\mathbf{V}}}^*=$ solution of the relaxed version of problem (\ref{eqn:cross-layer-t3})
 \ELSE
\STATE  $\mbox{Suboptimal solution}=\,$ \TRUE\RETURN
 ${\overline{\mathbf{W}}}$, ${\xi}$ ,${\overline{\mathbf{V}}}=$ solution of problem (\ref{eqn:suboptimal1})
 \ENDIF

\end{algorithmic}
\end{algorithm}\vspace*{-8mm}
\end{table}

\section{Results}
\label{sect:result-discussion} We evaluate the
system performance for the proposed power allocation schemes using simulations.  The system bandwidth is $200$ kHz with a carrier center frequency of $470$ MHz \cite{report:80211af}. We adopt the TGn path loss model \cite{report:tgn} for indoor communications with a reference distance for the path loss model of 2 meters.  There are $K$ receivers uniformly distributed between
the reference distance and the maximum service distance of 10 meters.  The transmitter is equipped with $N_t=6$ antennas and  we assume a transmit and receive antenna gain of  10 dB. The multipath fading coefficients
are generated as independent and identically distributed Rician random
variables with Rician factor $3$ dB. The noise power and the RF energy conversion efficiency at the receivers are $-23$ dBm and $\varepsilon_k=0.5,\forall k$, respectively.   On the other hand, we assume $\Gamma_{req}=10$ dB and $\Gamma_{tol_k}=-10$  dB, $\forall k$, such that the minimum required secrecy capacity of the system is $C_{sec}\ge3.32$ bit/s/Hz.

\begin{figure}[t]
 \centering
\includegraphics[width=3.5in]{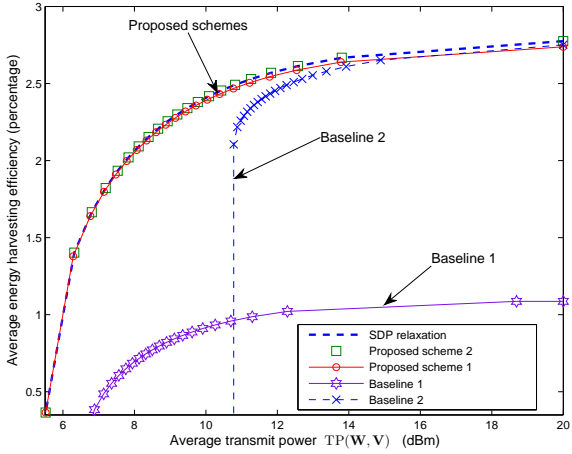}\vspace*{-0.1cm}
\caption{Average energy harvesting efficiency (percentage) versus average transmit power, $\mbox{TP}(\mathbf{W},\mathbf{V})$, for $K=3$ receivers, different power allocation schemes, and $P_{\max}=20$ dBm.} \label{fig:p_SNR}\vspace*{-0.5cm}
\end{figure}

\subsection{Average Energy Harvesting Efficiency }
Figure \ref{fig:p_SNR} depicts the  trade-off region for  the
average system energy harvesting efficiency and the total transmit power for $K=3$ receivers.
 It is obtained from Problem 3 by varying the values of $\lambda_j\ge0$ for $P_{\max}= 20$ dBm.
It can be observed that the average energy harvesting efficiency is a monotonically increasing function
 with respect to  the total transmit power.
 Besides, the two proposed suboptimal schemes perform closely to the trade-off region achieved by SDP relaxation. In particular, as expected, the system performance of suboptimal  algorithm 1 is worse than that of the proposed suboptimal  algorithm 2 and
 the SDP relaxation, i.e., it achieves a smaller trade-off region. This is because the contribution of
beamforming matrix $\overline{\mathbf{W}}$ to energy harvesting is neglected in the design of the proposed scheme 1.  On the other hand, the proposed scheme 2 exploits the possibility of achieving the global optimal solution via SDP relaxation and the  lower bound solution which leads to a superior performance compared to the proposed scheme 1.

For comparison, we also plot the average system energy harvesting efficiency of two baseline
power allocation schemes for Problem 3 in Figure
\ref{fig:p_SNR}. For baseline scheme 1, the artificial noise covariance matrix $\overline{\mathbf{V}}$ is chosen to lie in the null space of $\mathbf{H}$ such that the artificial noise does not degrade the channel quality of the desired receiver. Then,  we optimize  $\overline{\mathbf{W}}$ and the power of $\overline{\mathbf{V}}$ for Problem 3. In baseline scheme 2,  maximum ratio transmission (MRT) with respect to the desired receiver is adopted for the information beamforming matrix $\overline{\mathbf{W}}$. In other words, the beamforming direction of matrix $\overline{\mathbf{W}}$ is fixed. Then, we optimize the artificial noise covariance matrix $\mathbf{\overline{V}}$ and the power of  $\overline{\mathbf{W}}$.
 \begin{figure}[t]
\centering
\includegraphics[width=3.5 in]{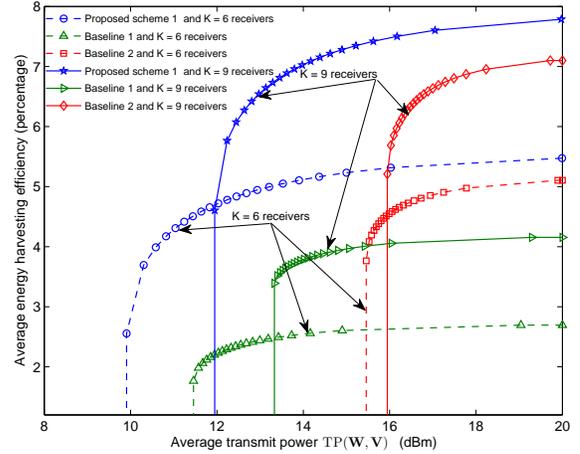}\vspace*{-0.2cm}
 \caption{Average energy harvesting efficiency (percentage) versus the average transmit power, $\mbox{TP}(\mathbf{W},\mathbf{V})$, for $P_{\max}=20$ dBm, different power allocation schemes, and different numbers of  receivers, $K$.} \label{fig:cap_SNR}
\end{figure}
 It can be observed
that the two baseline schemes achieve a significantly smaller trade-off region compared to the two proposed suboptimal schemes. As a matter of fact, both  artificial noise and beamforming matrix $\mathbf{\overline{W}}$ are jointly optimized for performing power allocation in our proposed suboptimal schemes via utilizing the CSI of all receivers.  On the contrary,  the artificial noise is restricted to be injected into the null space of the desired receiver in baseline scheme 1, i.e., less degrees of freedom are available for power allocation. Although the artificial noise does not harm the desired receiver in this case, it is less effective in jamming the potential eavesdroppers and facilitating efficient energy harvesting. As a result,  baseline scheme 1 performs worse than the other schemes. On the other hand, baseline scheme 2 is not effective in minimizing the total transmit power compared to the other schemes. However, surprisingly, it is an effective
approach in maximizing the energy harvesting efficiency as it is able to approach the trade-off region achieved by SDP relaxation,  at least in the high transmit power regime. Roughly speaking, the performance gain achieved by the two proposed suboptimal schemes compared to baseline schemes 1 and 2 are mainly due to the joint  optimization of  $\overline{\mathbf{V}}$ and   $\overline{\mathbf{W}}$.

Figure \ref{fig:cap_SNR} illustrates  the  trade-off region for  the
average system energy harvesting efficiency and the total transmit power for $P_{\max}=20$ dBm and different numbers of receivers, $K$. We
compare the system performance of the proposed scheme 2 with the baseline power allocation schemes. It can be observed that when the number of receivers increase, all the trade-off curves shift in the upper-right
direction.  In other words, the energy harvesting efficiency in the system increases with the number of receivers but at the expense of a higher  total transmit power. This is  because there are more idle receivers in the system
harvesting the power radiated by the transmitter which improves the energy harvesting efficiency. However, having additional idle receivers  also means  that there are additional potential eavesdroppers. Thus, a higher power level for artificial noise generation is required for guaranteing secure communication which leads to a higher total transmit power. We note that in all the considered scenarios, the proposed power allocation schemes are able to guarantee the minimum secrecy data rate requirement of $C_{sec}\ge 3.32$ bit/s/Hz.

\section{Conclusions}\label{sect:conclusion}
In this paper,  we introduced a multi-objective optimization problem formulation for the power allocation
algorithm design in secure MISO communication  systems with RF energy harvesting receivers.
 The problem formulation enables the dual use of artificial noise for guaranteing   secure communication and facilitating   power transfer to idle receivers.
 We have proposed a  SDP based power allocation algorithm to
 obtain an approximated solution for the multi-objective optimization problem. Besides, two suboptimal power allocation schemes providing rank-one solution  were designed. Simulation results unveiled the benefits of the dual use of artificial noise and showed
the excellent performance of the proposed suboptimal schemes.

\section*{Appendix I - Proof of Proposition 1}

The proof is based on the Charnes-Cooper transformation \cite{JR:ken_artifical_noise,JR:linear_fractional}. By applying the change of variables in (\ref{eqn:change_of_variables}) to (\ref{eqn:cross-layer1-flip}), Problem 1 in (\ref{eqn:cross-layer1-flip}) can be equivalently transformed to
\begin{eqnarray}\label{eqn:proof1}
&&\hspace*{-10mm} \underset{\overline{\mathbf{W}},\overline{\mathbf{V}}\in \mathbb{H}^{N_t},\xi}{\mino}\,\, \,\, \frac{-\sum_{k=1}^{K-1}\varepsilon_k\Tr(\mathbf{G}_k(\overline{\mathbf{W}}+\overline{\mathbf{V}}))}{\Tr(\overline{\mathbf{W}})+\Tr(\overline{\mathbf{V}})}\nonumber\\
\mbox{s.t.} &&\hspace*{15mm}\mbox{\textoverline{C1}, \textoverline{C2}, \textoverline{C3} ,\textoverline{C4}}, \mbox{\textoverline{C5}:}\,\, \xi > 0,\mbox{\textoverline{C7}}.
\end{eqnarray}
Now, we show that (\ref{eqn:proof1}) is equivalent to
\begin{eqnarray}\label{eqn:proof2}
&&\hspace*{-12mm} \underset{\overline{\mathbf{W}},\overline{\mathbf{V}}\in \mathbb{H}^{N_t},\xi}{\mino}\,\, \,\, -\sum_{k=1}^{K-1}\varepsilon_k\Tr(\mathbf{G}_k(\overline{\mathbf{W}}+\overline{\mathbf{V}}))\nonumber\\
\notag \mbox{s.t.} &&\hspace*{15mm}\mbox{\textoverline{C1}, \textoverline{C2}, \textoverline{C3} ,\textoverline{C4}, \textoverline{C7}}, \\
&&\hspace*{-5mm}\mbox{\textoverline{C5}:}\,\, \xi \ge 0,\quad\mbox{\textoverline{C6}:}\,\, \Tr(\overline{\mathbf{W}})+\Tr(\overline{\mathbf{V}})\le 1.
\end{eqnarray}
Denote the optimal solution of (\ref{eqn:proof2}) as $(\overline{\mathbf{W}}^*,\overline{\mathbf{V}}^*,\xi^*)$. If $\xi^*=0$, then $\overline{\mathbf{W}}=\overline{\mathbf{V}}=0$ according to $\mbox{\textoverline{C3}}$. Yet, such solution cannot satisfy $\mbox{\textoverline{C1}}$ for $\Gamma_{req}>0$. As a result, without loss of generality, the constraint $\xi>0$ can be replaced by $\xi\ge 0$. Besides, it can be deduced that $\mbox{\textoverline{C6}}$ is satisfied with equality for the optimal solution, i.e.,
\begin{eqnarray}
\Tr(\overline{\mathbf{W}}^*)+\Tr(\overline{\mathbf{V}}^*)=1.
\end{eqnarray}
We prove the above by contradiction. Suppose that $\mbox{\textoverline{C6}}$ is satisfied with strict inequality for the optimal solution, i.e.,  $\Tr(\overline{\mathbf{W}}^*)+\Tr(\overline{\mathbf{V}}^*)<1$. Then, we construct a new feasible solution $(\mathbf{A},\mathbf{B},c)=(\delta\overline{\mathbf{W}}^*,\delta\overline{\mathbf{V}}^*,\delta\xi^*)$ where $\delta>1$ such that $\Tr(\overline{\mathbf{W}}^*)+\Tr(\overline{\mathbf{V}}^*)=1$. It can be verified that the point $(\mathbf{A},\mathbf{B},c)$ achieves a lower objective value in (\ref{eqn:proof2}) than $(\overline{\mathbf{W}}^*,\overline{\mathbf{V}}^*,\xi^*)$. Thus, $(\overline{\mathbf{W}}^*,\overline{\mathbf{V}}^*,\xi^*)$ cannot be the optimal solution. As a result, (\ref{eqn:cross-layer-t1}) and  (\ref{eqn:proof2})  are equivalent.

The equivalence between transformed Problems 2 and 3 and their original problem formulations can be proved by following a similar approach as above.

\section*{Appendix II - Proof of Proposition 2}
The relaxed version of transformed Problem 3 is jointly convex with respect to the optimization variables and satisfies Slater's constraint qualification. As a result, the KKT conditions are necessary and sufficient conditions \cite{book:convex} for the solution of the relaxed problem. In the following, we focus on the KKT conditions related to the optimal $\mathbf{\overline{W}}^*$:
\begin{eqnarray}\label{eqn:KKT}
\mathbf{Y}^*\hspace*{-3mm}&\succeq&\hspace*{-3mm} \mathbf{0},\quad\alpha^*,\beta^*,\theta_k^*,\mu^*,\kappa_j^*,
\nu^*\ge 0, \mathbf{Y^*\overline{W}^*}=\mathbf{0}, \label{eqn:KKT-complementarity}\\ \label{eqn:KKT_Y}
\mathbf{Y^*}\hspace*{-3mm}&=&\hspace*{-3mm}\mathbf{I}_{N_t}(1+\psi^*)\hspace*{-0.5mm}+\hspace*{-0.5mm}  \sum_{k=1}^{K-1}\mathbf{G}_k (\theta_k^*-\kappa_1^*)-\beta^*\mathbf{H}.
\end{eqnarray}
Here, $ \mathbf{Y^*\overline{W}^*}=\mathbf{0}$ is the complementary slackness condition and is satisfied when the columns of $\mathbf{\overline{W}^*}$ lay in the null space of $\mathbf{Y}^*$. Therefore, if  $\mathbf{\overline{W}^*}\ne \mathbf{0}$ and $\Rank(\mathbf{Y}^*)=N_t-1$, then the optimal  $\mathbf{\overline{W}^*}$ must be a rank-one matrix.
From the proof of Proposition 1, we know that $\mbox{\textoverline{C6}}$ has to be  satisfied with equality; i.e., we have $\mu^*>0$. Besides, $(\theta^*_k\ge\kappa^*_1)$ holds by assumption.
Thus, matrix $\mathbf{I}_{N_t}(\mu^*+\alpha^*)+\sum_{k=1}^{K-1}\mathbf{G}_k (\theta^*_k-\kappa^*_1)  $ is a positive definite matrix with rank $N_t$. From (\ref{eqn:KKT_Y}), we have
\begin{eqnarray}\notag
&&\Rank(\mathbf{Y}^*)+\Rank(\beta^*\mathbf{H})\ge \Rank(\mathbf{Y}^*+\beta^*\mathbf{H})\\
&= &\Rank\Big(\mathbf{I}_{N_t}(\mu^*+\alpha^*)+\sum_{k=1}^{K-1} \mathbf{G}_k  (\theta^*_k-\kappa^*_1)\Big )=N_t\notag\\
&\Rightarrow& \Rank(\mathbf{Y}^*)\ge N_t-1.
\end{eqnarray}
Furthermore, $\mathbf{\overline{W}}^*\ne\mathbf{0}$ is required to satisfy the minimum SINR requirement of the desired receiver  in $\mbox{\textoverline{C1}}$ when $\Gamma_{req}>0$. As a result, $\Rank(\mathbf{Y}^*)=N_t-1$ and $\Rank(\mathbf{\overline{W}}^*)=1$.

Now, we focus on the relaxed SDP  version of transformed Problem 2.
  By putting $\lambda_2=1$ and $\lambda_1=0$ in the relaxed SDP  version of transformed Problem 3,
   it is equivalent to the relaxed SDP  version of transformed Problem 2.
Besides,  it can be shown that $\kappa_1=0$ and thus $\theta_k^*-\kappa_1^*\ge 0 $ always holds.
Therefore,  the relaxed SDP  version of transformed Problem 2 has always a rank-one solution.

Next, we consider the relaxed SDP  version of transformed Problem 1.
By setting $\lambda_1=1$ and $\lambda_2=0$ in the relaxed SDP  version of transformed Problem 3,
 it  is equivalent to the relaxed SDP  version of transformed Problem 1 and the result follows immediately.  We note that $\kappa_1=1$ holds for transforming Problem 3 back to Problem 1.

\bibliographystyle{IEEEtran}
\bibliography{OFDMA-AF}

\end{document}